\newcounter{mytempeqncnt}
\newtheorem{thm}{Theorem}
\newtheorem{lem}{Lemma}
\newenvironment{proof}[1][Proof]{\begin{trivlist}
\item[\hskip \labelsep {\bfseries #1}]}{\end{trivlist}}
\newcommand{\qed}{\nobreak \ifvmode \relax \else
      \ifdim\lastskip<1.5em \hskip-\lastskip
      \hskip1.5em plus0em minus0.5em \fi \nobreak
      \vrule height0.75em width0.5em depth0.25em\fi}
\newcommand{\MeijerG}[7]{G^{#1,#2}_{#3,#4} \left( \begin{smallmatrix} #5 \\ #6 \end{smallmatrix} \middle\vert #7 \right) }
\begin{document}
\title{Opportunistic Beamforming using Dumb Basis Patterns in Multiple Access Cognitive Channels}

\author{Ahmed~M.~Alaa,~\IEEEmembership{Student Member,~IEEE}, Mahmoud~H.~Ismail,~\IEEEmembership{Member,~IEEE}, and Hazim Tawfik % <-this % stops a space
%\author{Author 1, Author 2
\thanks{The authors are with the Department
of Electronics and Electrical Communications Engineering, Cairo University, Gizah,
12316, Egypt (e-mail: \{aalaa\}@eece.cu.edu.eg).}
%\thanks{Manuscript received XXXX XX, 2013; revised XXXX XX, 201X.}
\thanks{Manuscript received XXXX XX, 2013; revised XXXX XX, 201X.}}

\markboth{IEEE Transactions on XXXX,~Vol.~XX, No.~X, XXXX~201X}%
{Alaa \MakeLowercase{\textit{et al.}}:Opportunistic Beamforming using Dumb Basis Patterns in Multi-Access Cognitive Channels}

\maketitle
\begin{abstract}
In this paper, we investigate multiuser diversity in interference-limited Multiple Access (MAC) underlay cognitive channels with Line-of-Sight interference (LoS) from the secondary to the primary network. It is shown that for $N$ secondary users, and assuming Rician interference channels, the secondary sum capacity scales like $\log\left(\frac{K^{2}+K}{\mathcal{W}\left(\frac{K e^{K}}{N}\right)}\right)$, where $K$ is the $K$-factor of the Rician channels, and $\mathcal{W}(.)$ is the Lambert W function. Thus, LoS interference hinders the achievable multiuser diversity gain experienced in Rayleigh channels, where the sum capacity grows like $\log(N)$. To overcome this problem, we propose the usage of single radio Electronically Steerable Parasitic Array Radiator (ESPAR) antennas at the secondary mobile terminals. Using ESPAR antennas, we induce artificial fluctuations in the interference channels to restore the $\log(N)$ growth rate by assigning random weights to orthogonal {\it basis patterns}. We term this technique as {\it Random Aerial Beamforming} (RAB). While LoS interference is originally a source of capacity hindrance, we show that using RAB, it can actually be exploited to improve multiuser interference diversity by boosting the {\it effective number of users} with minimal hardware complexity.
\end{abstract}

\begin{IEEEkeywords}
Aerial degrees of freedom; basis patterns; cognitive radio; dumb antennas; line-of-sight channels; multiuser diversity; random beamforming
\end{IEEEkeywords}

\IEEEpeerreviewmaketitle
\section{Introduction}
\IEEEPARstart{S}{ignificant} interest has recently been devoted to the capacity analysis of underlay cognitive radio systems in fading environments. In underlay cognitive radio, a Secondary User (SU) aggressively transmits its data over the Primary User (PU) channel while keeping the interference experienced by the PU below a predefined {\it interference temperature} \cite{1}. Moreover, in multiuser cognitive networks, Dynamic Time Division Multiple Access (D-TDMA) can be used such that the SU with the best Signal-to-Interference Ratio (SINR) is scheduled to transmit at each time slot. Multiuser cognitive networks were thoroughly studied in \cite{2}-\cite{5}. While the sum capacity of conventional non-cognitive multiuser networks scales with $\log(\log(N))$ \cite{6}, it was shown in \cite{3} that the SUs capacity of {\it interference-limited} underlay cognitive networks scales with $\log(N)$. This improvement in the capacity growth rate is due to {\it Multiuser Interference Diversity} (MID) introduced in \cite{2}, which results from the opportunities offered by the fluctuations in the interference channels. Because a Line-of-Sight (LoS) channel has a poor dynamic range for the channel gain fluctuations, it is usually considered as a source of multiuser diversity hindrance. In the seminal work of Viswanath {\it et al.} \cite{6}, it was shown that the sum capacity for an $N$-user downlink cellular network grows like $\log\left(\left(\sqrt{\frac{1}{K+1}\log(N)}+\sqrt{\frac{K}{K+1}}\right)^{2}\right)$ when the channels between the base station and the mobile users are Rician with a $K$-factor of $K$. For such network, dumb antennas were used to induce artificial channel fluctuations in order to improve the capacity scaling characteristics. Given that LoS channels hinder multiuser diversity in non-cognitive networks, it is expected that LoS interference would also hinder MID in underlay cognitive channels. In this paper, we study the capacity scaling characteristics for an underlay Multiple Access (MAC) cognitive channel with Rician LoS interference. It is shown that the SUs sum capacity scales like $\log\left(\frac{K^{2}+K}{\mathcal{W}\left(\frac{K e^{K}}{N}\right)}\right)$, where $K$ is the Rician $K$-factor. For moderate values of $K$, this growth rate can be approximated as $\log\left(\frac{N(K+1)}{e^{K}}\right)$, which is equivalent to the growth rate of an underlay MAC with Rayleigh faded interference channels and $\frac{N(K+1)}{e^{K}}$ effective SUs. For large values of $K$, the scaling law tends to $\log(\log(N))$, which corresponds to the case when the interference channels are deterministic and no MID is exploited.

In order to overcome the negative impact of LoS interference on capacity scaling, and following the same line of thought in \cite{6}, we propose the usage of {\it Random Beamforming} to induce artificial fluctuations in the Rician interference channels. Because the primary users are considered to be oblivious to the secondary network, and deploying multiple dumb antennas at the base station will not affect the interference channels fluctuations, it is essential to apply random beamforming using dumb antennas at the mobile terminals. While deploying dumb antennas at a base station is bearable, it can not be tolerated in a low-cost mobile terminal with tight space limitations. Therefore, instead of manipulating Rician channels by assigning random weights to multiple dumb antennas, we adopt a random beamforming technique that assigns random weights to multiple {\it dumb basis patterns} of an Electronically Steerable Parasitic Array Radiator (ESPAR) antenna. An ESPAR antenna involves a single RF chain and has a reconfigurable radiation pattern that is controlled by assigning arbitrary weights to $M$ orthonormal basis radiation patterns via altering a set of reactive loads. Because ESPAR antennas entail a single RF chain and a compact parasitic antenna array, it is well suited for low cost mobile terminals. The Degrees-of-Freedom offered by the ESPAR basis patterns are termed as {\it Aerial DoF} \cite{7}, thus we term random beamforming using the ESPAR basis patterns as {\it Random Aerial Beamforming} (RAB). We show that using RAB, the SUs sum capacity scales with $\log(N)$ in Rician interference channels. A fundamental result of this paper is that, while LoS interference originally acts as a source of multiuser capacity hindrance, it can actually be exploited to improve MID. Using only two dumb basis patterns, i.e., one parasitic antenna element, we show that the LoS interference component can be exploited to improve the multiuser diversity compared to Rayleigh-faded interference. This improvement can be interpreted as a boost in the effective number of SUs, where we show that the capacity grows like $\log \left(\sqrt{\frac{(K+1)^{2}}{2K \pi}} N\right)$, which is equivalent to the capacity scaling of a Rayleigh-faded interference channel but with $\sqrt{\frac{(K+1)^{2}}{2K \pi}} N$ users instead of $N$. This effective number of users increases with the increase of $K$, thus LoS interference acts as a friend and not a foe in this case.

The application of the ESPAR antenna to wireless communications systems is not new. In \cite{7}-\cite{10}, the {\it Aerial DoF} provided by the orthonormal basis patterns are used to construct single Radio {\it Beamspace-MIMO} systems that can apply spatial multiplexing without the need to deploy multiple antennas. Blind interference alignment was implemented utilizing the reconfigurability feature of the ESPAR antenna in \cite{11}. Other applications of ESPAR antennas in multiuser systems can be found in \cite{12}.

The rest of the paper is organized as follows. In Section II, we present the system model. Capacity scaling for the MAC cognitive channel with LoS interference is studied in Section III. In Section IV, we present the RAB technique and study its impact on capacity scaling. Numerical results are presented in Section V and finally, the conclusions are drawn in Section VI.

\section{System Model and Notations}
\begin{figure}[!t]
\centering
\includegraphics[width=2.5in]{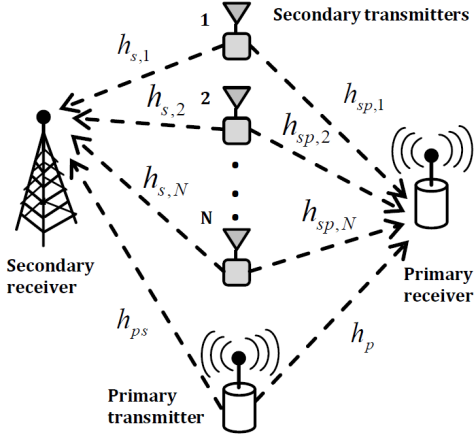}
\caption{Depiction for the multipleaccess cognitive channel.}
\label{fig_sim}
\end{figure}
This section is divided into two subsections. First, we present the system model for the cognitive MAC channel with LoS interference. Next, we explain the signal model for the ESPAR antenna.

\subsection{The cognitive MAC channel with LoS interference}

Assume an underlay MAC cognitive channel where $N$ SU mobile terminals coexist with a single PU transmit-receive pair as shown in Fig. 1. The secondary channels between the mobile users and the base station are denoted by ${\bf h_{s}} = \left(h_{s,1},h_{s,2},...,h_{s,N}\right)$, where the channel between the base station and the $i^{th}$ user is $h_{s,i} \sim \mathcal{CN}(0,\overline{\gamma}_{s})$, and all channels are independent and identically distributed. The secondary to primary network interference is characterized by a LoS Rician channel. Thus, the set of SU-to-PU channels ${\bf h_{sp}} = \left(h_{sp,1},h_{sp,2},...,h_{sp,N}\right)$ follow a Rician distribution such that $h_{sp,i} \sim \mathcal{CN}\left(\sqrt{\frac{K}{K+1}},\frac{\overline{\gamma}_{sp}}{K+1}\right)$. The PU-to-SU channel is $h_{ps}$, and has no impact on the capacity scaling characteristics, so it can be assumed to follow any distribution. The received signal at the SU receiver at the $k^{th}$ time slot is given by
\begin{equation}
\label{1}
\,\,\, r(k) = h_{s,n^{*}}(k) \sqrt{P_{s}({\bf h_{sp,n^{*}}})} x(k) + h_{ps}(k) x_{p}(k) + n(k),
\end{equation}
where $n^{*}$ is the index of the SU selected for transmission at the $k^{th}$ time slot, $r(k)$ is the signal received at the secondary base station, $x(k)$ is the symbol transmitted from the SU mobile terminal to the base station and has a unit energy, $P_{s}({\bf h_{sp,n^{*}}})$ is the SU transmit power, $x_{p}(k)$ is the PU signal and has an average energy of $\overline{\gamma}_{p}$, and $n(k) \sim \mathcal{CN}(0,1)$ is the noise signal at the base station receiver. The SU transmit power is adjusted according to a peak interference constraint $Q_{p}$ at the PU, i.e., $P_{s}({\bf h_{sp,n^{*}}}) |h_{sp,n^{*}}|^{2} \leq Q_{p}$. Letting $\gamma_{i} = |h_{i}|^{2}$ and the SU transmit power allocation $P_{s}({\bf h_{sp,n^{*}}}) = \frac{Q_{p}}{\gamma_{sp,n^{*}}}$, the SINR at the base station receiver is given by $\frac{\gamma_{s,n^{*}} \frac{Q_{p}}{\gamma_{sp,n^{*}}}}{1+\overline{\gamma}_{p} \gamma_{ps}}$. According to the D-TDMA policy, the selected user index is given by
\[n^{*} = \max_{n}\frac{\gamma_{s,n} \frac{Q_{p}}{\gamma_{sp,n}}}{1+\overline{\gamma}_{p} \gamma_{ps}}.\]
The signal model above describes the case when all SUs use conventional single antennas. In the next subsection, we study the case when ESPAR antennas are employed.

\subsection{The ESPAR signal model}

\begin{figure}[t]
\centering
\includegraphics[width=2.5in]{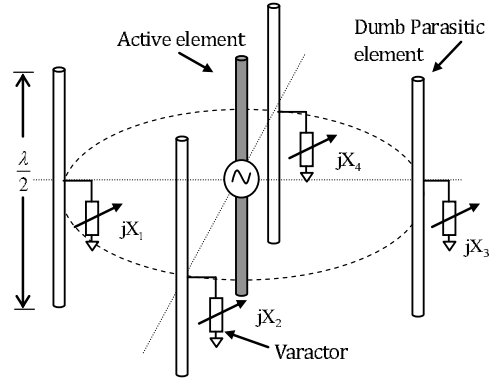}
\caption{The ESPAR antenna with 4 parasitic elements.}
\label{fig_sim}
\end{figure}

As shown in Fig. 2, an ESPAR with $M$ elements is composed of a single active element (e.g., a $\frac{\lambda}{2}$ dipole) that is surrounded by $M-1$ identical parasitic elements. Unlike multi-antenna systems, the parasitic elements are placed relatively close to the active elements. Hence, mutual coupling between different elements takes place and current is induced in all parasitic elements. The radiation pattern of the ESPAR is altered by tuning a set of $M-1$ reactive loads (varactors) $\mathbf{x} = \left[jX_{1} \ldots jX_{M-1}\right]$ attached to the parasitic elements \cite{7}. The currents in the parasitic and active elements are represented by an $M \times 1$ vector $\mathbf{i} = v_{s} (\mathbf{Y}^{-1}+\mathbf{X})^{-1}\mathbf{u}$, where ${\bf Y}$ is the $M \times M$ admittance matrix with $y_{ij}$ being the mutual admittance between the $i^{th}$ and $j^{th}$ elements. The load matrix $\mathbf{X}$ = {\bf diag}$\left(50,\,\, \mathbf{x}\right)$\footnote{The opertaion {\bf X = diag(x)} embeds a vector {\bf x} in the diagonal matrix {\bf X}.} controls the ESPAR beamforming, $\mathbf{u} = \left[1 \,\, 0 \ldots 0\right]^{T}$ is an $M \times 1$ vector and $v_{s}$ is the complex feeding at the active element \cite{8}. The radiation pattern of the ESPAR at an angle $\theta$ is thus given by $P(\theta) = \mathbf{i}^{T}\mathbf{a}(\theta)$, where $\mathbf{a}(\theta) = \left[a_{0}(\theta) \ldots a_{M-1}(\theta)\right]^{T}$ is the steering vector of the ESPAR at an angle $\theta$. The beamspace domain is a signal space where any radiation pattern can be represented as a point in this space. To represent the radiation pattern $P(\theta)$ in the beamspace domain, the steering vector $\mathbf{a}(\theta)$ is decomposed into a linear combination of a set of orthonormal basis patterns $\{\Phi_{i}(\theta)\}_{i=0}^{N-1}$ using Gram-Schmidt orthonormalization, where $N \leq M$ \cite{8}. It can be shown that the orthonormal basis patterns of the ESPAR (also known as the ADoF \cite{7}-\cite{12}) are equal to the number of parasitic elements (i.e., $N = M$). Therefore, the ESPAR radiation pattern in terms of the orthonormal basis patterns can be written as \cite{8}
\begin{equation}
\label{2}
P(\theta) = \sum_{l=1}^{M} w_{l} \Phi_{l}(\theta),
\end{equation}
where $w_{l} = \mathbf{i}^{T}\mathbf{q}_{l}$ are the weights assigned to the basis patterns and $\mathbf{q}_{l}$ is an $M \times 1$ vector of projections of all the steering vectors on $\Phi_{l}(\theta)$. Thus, the ESPAR radiation pattern is formed by manipulating the reactive loads attached to the parasitic elements. Recalling the system model presented in the previous subsection, we assume that both the secondary base station and the PU receiver adopt a single antenna with a single basis pattern, and each of the $N$ SUs adopt an ESPAR antenna with $M$ basis patterns. As shown in \cite{13}, the linear combination of transmitted symbols leads to a linear combination of antenna responses at the receiver. Assume that the $n^{th}$ SU assigns a weight vector ${\bf w_{n}}(k) = [w_{n,1}(k), w_{n,2}(k),...,w_{n,M}(k)]$ to the ESPAR antenna basis patterns at the $k^{th}$ time slot. In this case, the received signal at the secondary base station is given by 
\[r(k) = \sqrt{P_{s}({\bf h^{i}_{sp,n^{*}}}(k), {\bf w_{n^{*}}}(k))} \left(\sum_{i=1}^{M} h^{i}_{s,n^{*}}(k) w_{n^{*},i}(k)\right) +\]
\begin{equation}
\label{3}
h_{ps}(k) x_{p}(k) + n(k),
\end{equation}
where $h^{i}_{s,n^{*}}(k)$ is the channel response between the $i^{th}$ transmit basis pattern and the base station antenna for the selected SU, and ${\bf h^{i}_{sp,n^{*}}}(k) = [h^{1}_{sp,n^{*}}(k), h^{2}_{sp,n^{*}},..., h^{M}_{sp,n^{*}}]$ is a vector of the $M$ channel responses between the $i^{th}$ transmit basis pattern of the selected SU and the PU receiver antenna. This signal model will be used for the analysis of the RAB technique presented in Section IV.

\section{Capacity Scaling of the Cognitive MAC Channel with LoS Interference}
In a conventional non-cognitive MAC channel with a transmit power constraint, the sum capacity grows with the number of users $N$ with a growth rate of $\log(\log(N))$ \cite{6}. However, for a cognitive MAC channel when only an interference constraint is imposed at the PU receiver, the SUs sum capacity grows like $\log(N)$ \cite{3}, because in addition to multiuser diversity, we can also exploit MID \cite{2}. These scaling laws are applicable for Rayleigh fading channels. In \cite{6}, it was shown that in the non-cognitive MAC system with Rician channels, the sum capacity scales like $\log\left(\left(\sqrt{\frac{1}{K+1}\log(N)}+\sqrt{\frac{K}{K+1}}\right)^{2}\right)$ instead of $\log(\log(N))$, where $K$ is the Rician $K$-factor. Thus, LoS channels hinder multiuser diversity because they experience a smaller dynamic range for the channels fluctuations. In this section, we aim at understanding the impact of LoS interference on the ability of a cognitive MAC channel to exploit MID. Because a LoS interference channel would entail a small dynamic range for the channel gain fluctuations, we expect that MID will be limited as well. The following theorem quantifies the impact of LoS interference on the capacity scaling of a cognitive MAC channel.

\begin{thm}
{\it The sum capacity of a cognitive MAC channel with LoS SU-to-PU interference grows like $\log\left(\frac{K^{2}+K}{\mathcal{W}\left(\frac{K e^{K}}{N}\right)}\right)$, where $K$ is the $K$-factor of the Rician channels, $N$ is the number of SUs, and $\mathcal{W}(.)$ is the Lambert W function.}
\end{thm}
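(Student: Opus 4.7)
The plan is to reduce the problem to an extreme-value analysis of a suitable ratio of fading powers. Under the D-TDMA policy described in Section II, the achievable sum rate equals $E[\log(1+\mathrm{SINR}^{*})]$ with $\mathrm{SINR}^{*} = \frac{Q_{p}}{1+\overline{\gamma}_{p}\gamma_{ps}}\,Z_{(N)}$, where I set $Z_{n} := \gamma_{s,n}/\gamma_{sp,n}$ and $Z_{(N)} := \max_{1\le n\le N} Z_{n}$. The constants $Q_{p}$, $\overline{\gamma}_{s}$, $\overline{\gamma}_{sp}$ and the PU-to-SU term contribute only a bounded, $N$-independent additive offset inside the logarithm, so I focus exclusively on the growth rate of $Z_{(N)}$.

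The first step is to obtain the CCDF of $Z_{n}$ in closed form. Conditioning on $\gamma_{sp,n}$ and using that $\gamma_{s,n}$ is exponential with mean $\overline{\gamma}_{s}$ and independent of $\gamma_{sp,n}$ gives
\begin{equation*}
\bar F_{Z}(z) \;=\; E\!\left[e^{-z\gamma_{sp,n}/\overline{\gamma}_{s}}\right] \;=\; M_{\gamma_{sp}}\!\left(-z/\overline{\gamma}_{s}\right),
\end{equation*}
which is just the Laplace transform of the noncentral chi-square power of a Rician variate. Plugging in the standard closed form and writing $\tilde z := z\,\overline{\gamma}_{sp}/\overline{\gamma}_{s}$ yields
\begin{equation*}
\bar F_{Z}(z) \;=\; \frac{K+1}{K+1+\tilde z}\,\exp\!\left(-\frac{K\,\tilde z}{K+1+\tilde z}\right).
\end{equation*}

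The second step is to apply extreme-value theory. Since $\bar F_{Z}(z)\sim (K+1)e^{-K}/\tilde z$ as $\tilde z\to\infty$, the tail is regularly varying of index $-1$, placing $Z_{n}$ in the Fr\'echet domain of attraction; in particular $Z_{(N)}$ concentrates around the deterministic level $a_{N}$ characterised by $\bar F_{Z}(a_{N})=1/N$. Substituting the closed form into this equation and introducing the change of variable $u := K(K+1)/(K+1+\tilde a_{N})$, a few lines of algebra collapse the defining equation into the canonical transcendental form $u\,e^{u} = K e^{K}/N$. By definition of the Lambert W function, $u = \mathcal{W}(K e^{K}/N)$, so $\tilde a_{N} = K(K+1)/\mathcal{W}(K e^{K}/N) - (K+1)$. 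As $N\to\infty$, $\mathcal{W}(K e^{K}/N)\to 0$, so the first term dominates and $\tilde a_{N} \sim (K^{2}+K)/\mathcal{W}(K e^{K}/N)$.

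The final step is a standard concentration argument converting the location parameter into a capacity scaling law: $\log(1+\mathrm{SINR}^{*}) = \log(a_{N}) + O_{p}(1)$, and taking expectations preserves the order. Therefore the sum capacity grows like $\log\!\left(\frac{K^{2}+K}{\mathcal{W}(K e^{K}/N)}\right)$, as claimed. The main obstacle I anticipate is not any single step in isolation but rather spotting the substitution $u=K(K+1)/(K+1+\tilde a_{N})$ that exposes the Lambert W structure; without this change of variable the equation $\bar F_{Z}(a_{N})=1/N$ looks resolutely transcendental, and it is only after the substitution that it assumes the canonical $u e^{u}=c$ shape that defines $\mathcal{W}$.
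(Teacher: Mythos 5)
Your proposal is correct and arrives at exactly the paper's normalizing constant, but the central technical step is carried out by a genuinely different and more elementary route. The paper derives the density of $z_{n}=\gamma_{s,n}/\gamma_{sp,n}$ by rewriting the Rician power pdf as a Meijer-$G$ function, applying the ratio-distribution integral $f_{Z}(z)=\int |y|\,p_{X,Y}(zy,y)\,dy$, evaluating the result through the Laplace transform of a Meijer-$G$ function, and then integrating to obtain the closed-form cdf in (\ref{A8}). You instead condition on $\gamma_{sp,n}$ and use the exponential survival function of $\gamma_{s,n}$, so that the CCDF of the ratio is simply the moment generating function of the noncentral-$\chi^{2}$ power evaluated at $-z/\overline{\gamma}_{s}$; a short calculation confirms that your $\frac{K+1}{K+1+\tilde z}\exp\bigl(-\tfrac{K\tilde z}{K+1+\tilde z}\bigr)$ is algebraically identical to the complement of (\ref{A8}). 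From that point the two arguments coincide: a regularly varying tail of index $-1$, the Fr\'echet domain of attraction, the defining equation $\bar F_{Z}(a_{N})=1/N$, and the substitution $u=K(K+1)/(K+1+\tilde a_{N})$ that collapses it to $u e^{u}=Ke^{K}/N$ --- a step the paper presents only as the finished closed form (\ref{A9}) and which you usefully make explicit. Two remarks. First, your closing assertion that ``taking expectations preserves the order'' needs the same justification the paper supplies by citation, namely that convergence in distribution of the normalized maximum of nonnegative variates implies moment convergence; convergence in distribution alone does not control $\mathbb{E}\{\log(1+Z_{(N)})\}-\log a_{N}$. Second, your direct concentration route is arguably sounder than the paper's opening Jensen step: since $\bar F_{Z}(z)\sim (K+1)e^{-K}/\tilde z$, one has $\mathbb{E}\{z_{n}\}=\infty$ (the Rician power has positive density at the origin, so $\mathbb{E}\{1/\gamma_{sp,n}\}$ diverges), which makes the upper bound in (\ref{A2}) vacuous; working with $\log(1+\mathrm{SINR}^{*})$ directly avoids this issue entirely.
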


\begin{proof} See Appendix A. \IEEEQEDhere
\end{proof}
It is clear from Theorem 1 that the capacity growth rate in LoS interference is slower than the $\log(N)$ rate experienced in Rayleigh fading interference channels. We examine the results of Theorem 1 for various values of $K$ as follows:
\begin{itemize}
\item For $K \to$ 0, and given that when $x \to 0$, $\mathcal{W}(x) \to x$ \cite{17}, we have $\mathcal{W}\left(\frac{K e^{K}}{N}\right) \to \frac{K e^{K}}{N}$, and $\frac{K(K+1)}{\frac{K e^{K}}{N}} \to N$. Thus, for $K \to$ 0 (Rayleigh fading), the SU capacity grows like $\log(N)$.
\item For moderate values \footnote{Because the term $K e^{K}$ grows exponentially with $K$, moderate values of $K$ are $K <$ 3 $\sim$ 4. For larger values of $K$, the term $K e^{K}$ becomes larger than any practical value of $N$, thus $\frac{K e^{K}}{N}$ can not to be considered small and the approximation $\mathcal{W}(x) \approx x$ would not apply.} of $K$, for $N \to \infty$, $\mathcal{W}\left(\frac{K e^{K}}{N}\right) \to \frac{K e^{K}}{N}$, and $\frac{K(K+1)}{\mathcal{W}\left(\frac{K e^{K}}{N}\right)} \to \frac{N(K+1)}{e^{K}}.$ Thus, the SU capacity scales like $\log\left(\frac{N(K+1)}{e^{K}}\right)$, which can be viewed as the same growth rate in Rayleigh channels but with an effective number of users $\frac{N(K+1)}{e^{K}}$ instead of $N$. Because $\frac{(K+1)}{e^{K}} < 1$, $\forall K > 0$, then Rician interference channels reduce MID by reducing the effective number of SUs for moderate values of $K$.
\item For large values of $K$ and practical values of $N$, the term $\frac{K e^{K}}{N} >>$ 1, thus the series expansion of $\mathcal{W}\left(\frac{K e^{K}}{N}\right)$ is given by \cite{17}
\[\mathcal{W}\left(\frac{K e^{K}}{N}\right) \approx \log\left(\frac{K e^{K}}{N}\right)-\log\left(\log\left(\frac{K e^{K}}{N}\right)\right) +O(1).\]
Thus, the SU capacity grows like $\log(\log(N))$ for large values of $K$. Thus, when the LoS component dominates ($K$ is large), the interference channels are almost deterministic and the interference constraint is effectively a transmit power constraint. Hence, the cognitive system is equivalent to a non-cognitive one, where only multiuser diversity of the secondary channels is exploited and the capacity growth is a double logarithmic function of the number of SUs.
\end{itemize}

From the above discussion, we conclude that LoS interference hinders MID. The larger the $K$-factor is, the less mutliuser interference diversity gain we attain. When $K$ is very large, the {\it interference-limited} cognitive system turns into a {\it transmit power-limited} non-cognitive one with a capacity scaling of $\log(\log(N))$. In the next section, we propose the RAB technique, and show that it can be used to exploit LoS interference and improve MID.

\section{Random Aerial Beamforming}
In order to combat the negative effects of LoS interference on the SU capacity, we propose the RAB scheme, which induces artificial fluctuations in the SU-to-PU interference channels. Because deploying cumbersome multiple antennas is not practical for mobile devices, we propose the usage of ESPAR antennas with $M$ orthogonal basis patterns. RAB is applied by the $n^{th}$ SU via adjusting the reactive load attached to the $i^{th}$ parasitic antenna at the $k^{th}$ time slot such that $w_{n,i}(k) = \sqrt{\alpha_{n,i}(k)} e^{j\theta_{n,i}(k)} x(k)$, where $x(k)$ is the transmitted symbol and $\sqrt{\alpha_{n,i}(k)} e^{j\theta_{n,i}(k)}$ is a random weight. Without loss of generality, we set $\sqrt{\alpha_{n,i}(k)}= \frac{1}{\sqrt{M}}$ and $\theta_{n,i}(k) \sim \mbox{Unif}(0,2 \pi)$. Therefore, the signal model for the selected user $n^{*}$ in (\ref{3}) can be rewritten as
\[r(k) = \sqrt{P_{s}({\bf h^{i}_{sp,n^{*}}}(k), {\bf w_{n^{*}}}(k))} \times \]
\begin{equation}
\label{5}
\underbrace{\left(\sum_{i=1}^{M} \sqrt{\alpha_{n^{*},i}(k)} e^{j\theta_{n^{*},i}(k)} h^{i}_{s,n^{*}}(k)\right)}_{h^{eq}_{s,n^{*}}} x(k) + h_{ps}(k) x_{p}(k) + n(k)
\end{equation}
where $h^{eq}_{s,n^{*}}$ is the equivalent SU-to-SU channel after applying RAB, and the transmit power $P_{s}({\bf h^{i}_{sp,n^{*}}}(k), {\bf w_{n^{*}}}(k))$ is given by
\begin{equation}
\label{7}
P_{s}({\bf h^{i}_{sp,n^{*}}}(k), {\bf w_{n^{*}}}(k)) = \frac{Q_{p}}{\left|\frac{1}{\sqrt{M}}\sum_{i=1}^{M} e^{j\theta_{n^{*},i}(k)} \, h^{i}_{sp,n^{*}}(k)\right|^{2}}.
\end{equation}
As shown in (\ref{7}), the equivalent SU-to-PU channel is the resultant of the addition of different LoS channels with random phases for each basis pattern. By varying these phases over time, the equivalent channel will experience artificial fluctuations that allow retaining the MID hindered by LoS interference. The capacity scaling for the RAB scheme is given in the following theorem.

\begin{thm}
{\it For the cognitive MAC channel with the SUs applying the RAB scheme, the capacity scales like $\log(N)$ for LoS SU-to-PU interference channels.}
\end{thm}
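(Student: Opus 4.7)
The plan is to show that the RAB scheme turns the Rician SU-to-PU interference into an equivalent channel whose marginal statistics are effectively those of a Rayleigh channel (zero mean, with a density of the squared magnitude that is strictly positive at the origin), and then to invoke the $K\to 0$ limit of Theorem~1, which is already noted in the preceding discussion to give $\log(N)$ growth.

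First I would compute the first two moments of the equivalent interference channel
\[
h^{eq}_{sp,n^{*}}(k) \;=\; \tfrac{1}{\sqrt{M}}\sum_{i=1}^{M} e^{j\theta_{n^{*},i}(k)}\, h^{i}_{sp,n^{*}}(k).
\]
Because the random phases are i.i.d.\ uniform on $[0,2\pi)$, we have $E[e^{j\theta_{n^{*},i}}]=0$, hence $E[h^{eq}_{sp,n^{*}}]=0$; the deterministic LoS mean $\sqrt{K/(K+1)}$ is wiped out by phase randomization. A direct variance computation gives $E[|h^{eq}_{sp,n^{*}}|^{2}]=(K+\overline{\gamma}_{sp})/(K+1)$, so the average channel energy is preserved. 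The same argument applied to $h^{eq}_{s,n^{*}}$ (a sum of rotated zero-mean complex Gaussians) shows that the desired link remains $\mathcal{CN}(0,\overline{\gamma}_{s})$ and stays independent across the $N$ SUs since the $\theta_{n,i}$ are.

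The central step is to prove that $|h^{eq}_{sp,n^{*}}|^{2}$ admits a density that is bounded and strictly positive at the origin, so that arbitrarily small values of the interference gain occur with non-vanishing probability and the transmit power $Q_{p}/|h^{eq}_{sp,n^{*}}|^{2}$ in~(\ref{7}) can grow with $N$. I would do this by conditioning on the phase vector: given the phases, $h^{eq}_{sp,n^{*}}$ is complex Gaussian with mean $\sqrt{K/(K+1)}\cdot\tfrac{1}{\sqrt{M}}\sum_{i}e^{j\theta_{n^{*},i}}$ and variance $\overline{\gamma}_{sp}/(K+1)$, yielding a Ricean conditional pdf for the amplitude. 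Averaging this pdf against the uniform phase distribution smears the conditional means over the complex plane and leaves a marginal density that is strictly positive at zero; in this sense the effective $K$-factor of the RAB interference channel is $K_{\mathrm{eff}}=0$.

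With $K_{\mathrm{eff}}=0$ and the secondary channel unchanged, the D-TDMA selection rule operates on SINRs of the same structural type as in the Rayleigh case of Theorem~1. Substituting $K=0$ into $\log((K^{2}+K)/\mathcal{W}(Ke^{K}/N))$ via the identity $\mathcal{W}(x)\to x$ as $x\to 0$ then yields the claimed $\log(N)$ scaling. The main obstacle I expect is the density-at-origin argument for general $M$: the Gaussian CLT for $\tfrac{1}{\sqrt{M}}\sum_{i}e^{j\theta_{n^{*},i}}$ is unavailable for small $M$ (in particular for the minimum case $M=2$ emphasized in the abstract), so the behaviour near zero must be verified directly from the exact mixture of Ricean conditional pdfs rather than by appeal to a Gaussian limit.
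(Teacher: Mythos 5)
Your proposal is correct in substance but takes a genuinely different route from the paper. The paper's Appendix B is an opportunistic-nulling argument in the spirit of Viswanath \emph{et al.}: it splits the equivalent SU-to-PU power into an artificial-fading term plus a statistics-preserving scattered term, notes that for any $\delta>0$ a fraction $\epsilon$ of the $N$ users almost surely have artificial-fading magnitude below $\delta$, treats those $\epsilon N$ users as Rician with LoS norm $\delta$, substitutes into the Theorem~1 growth formula written in terms of $a=\frac{K}{K+1}$ and $v=\frac{1}{K+1}$, and lets $\delta\to 0$ to get $\log\left(\frac{\epsilon N}{v}\right)$. You instead attack the marginal law of the equivalent interference power: phase randomization kills the LoS mean, and conditioning on the phase vector exhibits $|h^{eq}_{sp,n}|^{2}$ as a mixture of noncentral chi-square laws whose density at the origin is positive and bounded above by $\frac{K+1}{\overline{\gamma}_{sp}}$, which is exactly what controls the $1/z$ tail of $z_{n}=\gamma_{s,n}/\gamma_{sp,n}$ and hence a normalizing constant $a_{N}=\Theta(N)$ in the Frechet limit of Appendix A. Your route is arguably more careful where the paper is heuristic (the claim that the selected user is ``highly likely'' among the $\epsilon N$ nearly-nulled ones is asserted, not proved), and you rightly flag that the CLT fails for small $M$: for $M=2$ the artificial-fading power alone has an arcsine-type density diverging at $0$, and it is the scattered component that restores boundedness. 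Two points to tighten. First, you cannot literally ``substitute $K=0$ into Theorem~1,'' since the RAB channel is not the Rician channel of Theorem~1 for any $K$; what you need is to rerun the Appendix A steps ($\lim_{z\to\infty}\frac{zf_{z}(z)}{1-F_{z}(z)}$ finite, $F_{z}(a_{N})=1-\frac{1}{N}$) for your mixture law, where a Laplace-type evaluation of $\mathbb{E}\{e^{-t\gamma^{eq}_{sp}/\overline{\gamma}_{s}}\}\sim f_{\gamma^{eq}_{sp}}(0)\,\overline{\gamma}_{s}/t$ shows that only the density at the origin matters for the exponent; indeed Lemma~1 gives the $M=2$ constant as $\sqrt{\frac{(K+1)^{2}}{2\pi K}}$ rather than the Rayleigh value, so the channel is not literally ``$K_{\mathrm{eff}}=0$'' even though the $\log(N)$ scaling is the same. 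Second, the equivalent direct and interference links of a given user share the same phases $\theta_{n,i}$, so their independence (needed to reuse the Appendix A bound) rests on the rotation invariance of $h^{i}_{s,n}$, which you use only implicitly.
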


\begin{proof} See Appendix B. \IEEEQEDhere
\end{proof}
Theorem 2 proves that RAB can guarantee the $\log(N)$ growth rate even if the PU is subject to LoS interference. The achievability of such growth rate can be attributed to the {\it opportunistic nulling} applied by RAB through the addition of phase shifted versions of the LoS interference component. The results of Theorem 2 are valid for any number of basis patterns $M$ and for any $K$. However, the impact of $M$ and $K$ on the multiuser diversity gain is not clear. In other words, given that all values of $M$ will achieve the $\log(N)$ growth rate, which value of $M$ would result in the best MID characteristics? and what is the impact of the $K$-factor on the achievable MID gain?

An interesting result is that RAB does not only retain the $\log(N)$ growth rate in LoS channels, but it can exploit LoS interference as well. In other words, when applying RAB, the MID gain in case of LoS interference is larger than the case of Rayleigh fading interference. Moreover, the MID gain increases with the increase of the $K$-factor. Thus, RAB turns LoS interference from a source of capacity hindrance to a capacity advantage. These effects even apply when the SUs use 2 dumb basis patterns only, which entails minimal hardware complexity for the ESPAR antenna. In the following, we study the impact of $M$ and $K$ on the behavior of MID gain.

From Appendix B, we know that $h^{eq}_{s,n^{*}}$ follows a Rayleigh distribution, and the equivalent SU-to-PU channel power can be represented by (\ref{B2}). Using Euler identity, the artificial fading component can be represented as
\[\sqrt{\frac{K\overline{\gamma}_{sp}}{M(K+1)}} \sum_{i=1}^{M} \left( \cos(\theta_{n^{*},i}(k)+\phi_{i,n^{*}}) + j \sin(\theta_{n^{*},i}(k)+\phi_{i,n^{*}}) \right).\]
Noting that RAB is applied by selecting independent uniformly distributed random phases $\theta_{n^{*},i}(k) \sim \mbox{Unif}(0, 2\pi)$, then $y_{i}= \sum_{i=1}^{M} \cos(\theta_{n^{*},i}(k)+\phi_{i,n^{*}}), i = 1,2,...,M$, are independent and identically distributed. It can be easily shown that when $\theta_{i,n^{*}}(k) \sim \mbox{Unif}(0,2\pi)$, then $\Psi = (\theta_{n^{*},i}(k)+\phi_{i,n^{*}}) \sim \mbox{Unif}(\phi_{i,n^{*}}, \phi_{i,n^{*}}+2\pi)$. Because we are only interested in $\Psi \, \mbox{mod} \, 2\pi$, it can be easily shown that $\Psi \, \mbox{mod} \, 2\pi \sim \mbox{Unif}(0, 2\pi)$. Using random variable transformation, the pdf of $y_{i}$ is given by
\begin{equation}
\label{10}
f_{y_{i}}(y_{i}) = \frac{1}{\pi \sqrt{1-y_{i}^{2}}}, -1 \leq y_{i} \leq 1,
\end{equation}
with $\mathbb{E}\{y_{i}\} = 0$, and $\mathbb{E}\{(y_{i}-\mathbb{E}\{y_{i}\})^{2}\} = \frac{1}{2}$. We now study two distinct scenarios based on the number of basis patterns: $M \to \infty$, and $M$ = 2.

\subsubsection{For large number of basis patterns}
In this case, the {\it central limit theorem} applies, and $\sqrt{\frac{K\overline{\gamma}_{sp}}{M(K+1)}} \sum_{i=1}^{M}\cos(\theta_{n^{*},i}(k)+\phi_{i,n^{*}}) \sim \mathcal{N}\left(0,\frac{K\overline{\gamma}_{sp}}{2M(K+1)}\right)$. The same analysis can be applied for the term $\sqrt{\frac{K\overline{\gamma}_{sp}}{M(K+1)}} \sum_{i=1}^{M}\sin(\theta_{n^{*},i}(k)+\phi_{i,n^{*}}) \sim \mathcal{N}(0,\frac{K\overline{\gamma}_{sp}}{2M(K+1)})$, thus the artificial fading component follows a zero mean complex gaussian distribution with variance $	\frac{K\overline{\gamma}_{sp}}{M(K+1)}$. Therefore, the equivalent SU-to-PU channel $h^{eq}_{s,n^{*}} \sim \mathcal{CN}(0, \overline{\gamma}_{sp})$, which means that RAB converts the Rician channel into a Rayleigh channel. Hence, the normalizing constant in (\ref{A9}), which is proportional to the average SINR of the selected SU, can be obtained by setting $K = 0$ as follows
\begin{equation}
\label{11}
a_{N} = \frac{N\overline{\gamma}_{s}}{\overline{\gamma}_{sp}} - \frac{\overline{\gamma}_{s}}{\overline{\gamma}_{sp}} \approx  \frac{N\overline{\gamma}_{s}}{\overline{\gamma}_{sp}}.
\end{equation}
Using the derivations in Appendix A, it can be shown that the capacity grows like $\log(N)$, and the average SINR for the selected SU is approximately $\frac{N\overline{\gamma}_{s}}{\overline{\gamma}_{sp}}$. \\

\subsubsection{For $M$ = 2}
The artificial fading channel power can be expressed as
\[\frac{K\overline{\gamma}_{sp}}{2(K+1)} \left|\left( \cos(\theta_{n^{*},1}(k)+\phi_{1,n^{*}}) + \cos(\theta_{n^{*},2}(k)+\phi_{2,n^{*}}) \right. \right.\]
\[\left. \left. + j \sin(\theta_{n^{*},1}(k)+\phi_{1,n^{*}}) + j \sin(\theta_{n^{*},2}(k)+\phi_{2,n^{*}}) \right) \right|^{2},\]
which can be reduced to
\begin{equation}
\label{12}
\frac{K\overline{\gamma}_{sp}}{(K+1)} \left(1+\cos(\theta_{n^{*},1}(k)+\phi_{1,n^{*}} +\theta_{n^{*},2}(k)+\phi_{2,n^{*}}) \right).
\end{equation}
Let $\psi = \theta_{n^{*},1}(k)+\phi_{1,n^{*}} +\theta_{n^{*},2}(k)+\phi_{2,n^{*}}$. When $\theta_{n^{*},1}(k) \sim \mbox{Unif}(0,2\pi)$ and $\theta_{n^{*},2}(k) \sim \mbox{Unif}(0,2\pi)$, it can be shown that $\psi \mbox{mod} 2\pi \sim \mbox{Unif}(0,2\pi)$. Thus, the term $\cos(\theta_{n^{*},1}(k)+\phi_{1,n^{*}} +\theta_{n^{*},2}(k)+\phi_{2,n^{*}})$ follows the same pdf in (\ref{10}). The normalizing constant (average SINR for the selected user) is given by the following lemma.

\begin{lem}
{\it Applying RAB with 2 dumb basis patterns, the capacity of a cognitive MAC channel with LoS interference grows according to
\[\log \left(\sqrt{\frac{(K+1)^{2}}{2\pi K}}N\right).\]
}
\end{lem}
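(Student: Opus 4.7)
The plan is to reuse the extreme-value framework used in Appendix A, substituting the equivalent SU-to-PU channel induced by $M=2$ RAB for the original LoS gain. Under D-TDMA the sum capacity is asymptotically $\log(1+Z^{*})$ with $Z^{*}=\max_{n}\gamma_{s,n}Q_{p}/\gamma^{eq}_{sp,n}$ (the $1+\overline{\gamma}_{p}\gamma_{ps}$ factor in the denominator is $O(1)$ and therefore harmless), and its growth rate is governed by $\log a_{N}$, where $a_{N}$ is the $(1-1/N)$th quantile of the common per-user tail. The task therefore reduces to computing the decay of $\mathbb{P}(\gamma_{s,n}/\gamma^{eq}_{sp,n}>x)$ for large $x$ and reading off its scaling in $N$.

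First I would decompose $h^{eq}_{sp}=A+B$, where $A$ is the randomly phased sum of the two LoS means and $B$ is the randomly phased sum of the two scattered components. Rotation invariance of the complex Gaussian makes $B\sim\mathcal{CN}(0,\overline{\gamma}_{sp}/(K+1))$ independently of $A$, while (\ref{12}) gives $|A|^{2}=\tfrac{K\overline{\gamma}_{sp}}{K+1}(1+\cos\psi)$ with $\psi\bmod 2\pi\sim\mathrm{Unif}(0,2\pi)$. Conditioning on $\psi$ makes $\gamma^{eq}_{sp}=|A+B|^{2}$ noncentral chi-square, whose MGF is standard; averaging over $\psi$ collapses the resulting trigonometric integral through the identity $\tfrac{1}{2\pi}\int_{0}^{2\pi}e^{\eta\cos\psi}d\psi=I_{0}(\eta)$, and yields a Laplace transform of the form $(1+s\sigma^{2})^{-1}e^{-\alpha(s)}I_{0}(\alpha(s))$, with $\alpha(s)=sK\sigma^{2}/(1+s\sigma^{2})$ and $\sigma^{2}=\overline{\gamma}_{sp}/(K+1)$.

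Because $\gamma_{s,n}$ is exponential with mean $\overline{\gamma}_{s}$, the per-user tail equals this Laplace transform evaluated at $s=x/\overline{\gamma}_{s}$. As $x\to\infty$, $\alpha(s)\to K$ and $(1+s\sigma^{2})^{-1}\sim\overline{\gamma}_{s}(K+1)/(x\overline{\gamma}_{sp})$, producing a $1/x$ polynomial tail with prefactor $\overline{\gamma}_{s}(K+1)e^{-K}I_{0}(K)/\overline{\gamma}_{sp}$. Setting this tail equal to $1/N$, substituting the large-argument Bessel asymptotic $I_{0}(K)\sim e^{K}/\sqrt{2\pi K}$, and taking logarithms recovers $\log\bigl(\sqrt{(K+1)^{2}/(2\pi K)}\,N\bigr)$ up to the usual additive constants that do not affect the growth rate.

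The principal obstacle is the Laplace-transform step: the LoS and scattered contributions are coupled through the shared random basis-pattern phases and must be disentangled cleanly before the Bessel integral can be evaluated and its tail behavior extracted; in particular one must verify that the Bessel term approaches the finite limit $e^{-K}I_{0}(K)$ uniformly, so that it enters the quantile as a multiplicative constant. A secondary point is that the resulting $1/x$ tail places the problem in the Fr\'echet rather than Gumbel maximum-domain of attraction, so the EVT argument borrowed from Appendix A should be recast as a $(1-1/N)$th quantile statement for a regularly varying tail; this does not affect the leading $\log$ term but should be stated carefully.
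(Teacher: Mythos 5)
Your proposal is correct and arrives at exactly the same tail expression as the paper, but by a genuinely different computational route. The paper treats the equivalent interference channel as Rician with a \emph{random} LoS power $\tilde a$, derives the arcsine-type density of $\tilde a$, reuses the Rician ratio cdf (\ref{A8}) from Appendix A (itself obtained through Meijer-$G$ manipulations) as a conditional cdf given $\tilde a$, and then averages over $\tilde a$, collapsing the integral through two substitutions to $\frac{1}{\pi}\int_0^\pi e^{\eta\cos\theta}\,d\theta=I_0(\eta)$. You instead condition on the random phase $\psi$ itself, exploit the exponential numerator to identify the per-user tail $\mathbb{P}(\gamma_s/\gamma^{eq}_{sp}>x)$ with the Laplace transform of the noncentral-$\chi^2$ interference power at $s=x/\overline{\gamma}_s$, and average over $\psi$ directly via $\frac{1}{2\pi}\int_0^{2\pi}e^{-\eta\cos\psi}\,d\psi=I_0(\eta)$. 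The two computations yield the identical formula $\frac{1}{1+s\sigma^2}e^{-\alpha(s)}I_0(\alpha(s))$ with $\alpha(s)=sK\sigma^2/(1+s\sigma^2)$ (one checks that the paper's $p\,\tfrac{K}{K+1}$ equals your $\alpha(s)$), after which the arguments coincide: $\alpha\to K$ as $x\to\infty$, $e^{-K}I_0(K)\approx 1/\sqrt{2\pi K}$, and the $(1-1/N)$ quantile gives $a_N\approx\sqrt{(K+1)^2/(2\pi K)}\,N\overline{\gamma}_s/\overline{\gamma}_{sp}$. Your route is more elementary and self-contained --- no Meijer-$G$ machinery and no detour through the density of $\tilde a$ --- and it makes explicit the independence of the scattered component from the basis-pattern phases; the paper's route buys the reuse of the already-derived ratio cdf and the interpretation of RAB as randomizing the effective $K$-factor. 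Your two caveats are also apt: the $1/x$ tail is regularly varying of index $-1$, placing the maximum in the Fr\'echet domain exactly as in Appendix A, and the $\sqrt{2\pi K}$ prefactor comes from the large-argument Bessel asymptotic, an approximation the paper makes as well.
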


\begin{proof} See Appendix C. \IEEEQEDhere
\end{proof}

From Lemma 1, we conclude that when using only 2 basis patterns, the capacity follows the same growth rate of a cognitive MAC channel with Rayleigh faded interference channel, but with an effective number of SUs that is equal to $\approx \sqrt{\frac{(K+1)^{2}}{2\pi K}}N$ instead of $N$. The effective number of users increases with the increase of $K$. Thus, LoS interference can be exploited to improve MID gain and not only to restore the $\log(N)$ growth rate. An important point is that this capacity advantage is achieved with minimal hardware complexity, i.e., only 2 basis pattern, which is realized uing one parasitic antenna element and one active element.

\begin{figure}[t]
\centering
\includegraphics[width=3in]{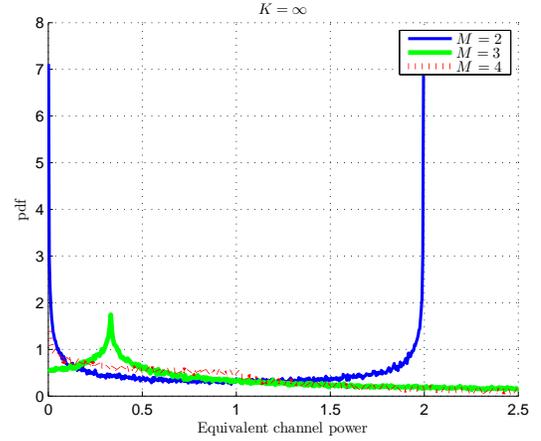}
\caption{Impact of the number of basis patterns on the pdf of the equivalent SU-to-PU channel power.}
\label{fig_sim}
\end{figure}

\begin{figure}[t]
\centering
\includegraphics[width=3in]{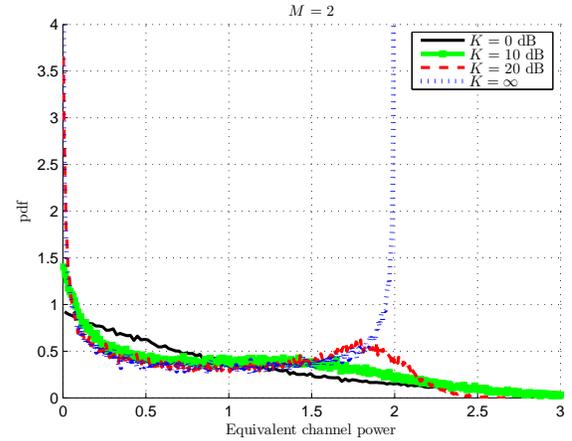}
\caption{Impact of the $K$-factor on the pdf of the equivalent SU-to-PU channel power.}
\label{fig_sim}
\end{figure}

In order to interpret the capacity advantage offered by the 2 basis patterns scenario and quantified by Lemma 1, we examine the equivalent SU-to-PU channel power in (\ref{B2}). It is apparent that the SU-to-PU channel consists of an artificial fading component and a scattered component. In Fig. 3, we set $K = \infty$ such that only the artificial fading component exists, and plot the pdf of the channel power for various values of $M$. It is clear that for $M = 2$, the pdf is concentrated around zero and the maximum value of 2. Thus, the interference channel is almost nulled half of the time, and the probability that the SU-to-PU attains an arbitrarily small value is high. As $M$ increases, the central limit theorem applies, and the pdf of the SU-to-PU channel power approaches an exponential distribution. Such distribution entails a {\it larger dynamic range}, but {\it less frequent nulling}, becaue the probability that the SU-to-PU attains an arbitrarily small value is higher in the case of $M = 2$ than in the case of $M > 2$. Therefore, using 2 basis patterns achieves desirable statistics for the interference channel.

Fig. 4 depicts the impact of the $K$-factor on the pdf of the SU-to-PU channel by plotting the pdf of the SU-to-PU channel power for $M$ = 2 and various values of $K$. Larger $K$-factors imply that the artificial fading channel dominates, and the desirable statistics created by the 2 basis patterns are more apparent in the pdf of the channel power. As the $K$-factor decreasess, the scattered component, which follows a Rayleigh distribution regardless of the RAB scheme applied, will dominate. This means that the SU-to-PU channel power will be almost exponential for small values of $K$. Therefore, larger $K$-factors imply that the interference channel is {\it SU-controlled}, and the SU can create desirable artificial channel statistics that nulls the interference more frequently. When the $K$-factor decreases, the SU-to-PU channel statistics are beyond the SU control, and the dominant scattered component will impose a quasi-exponential distribution.

\section{Numerical Results}

\begin{figure}[t]
\centering
\includegraphics[width=3.5in]{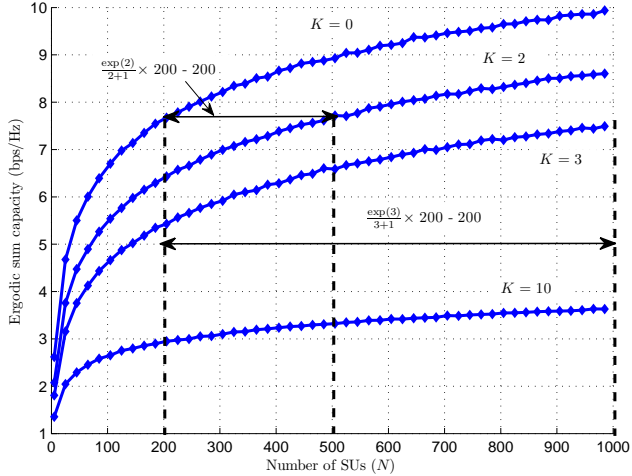}
\caption{Impact of LoS interference on the capacity of cognitive MAC.}
\label{fig_sim}
\end{figure}

\begin{figure}[t]
\centering
\includegraphics[width=3.5in]{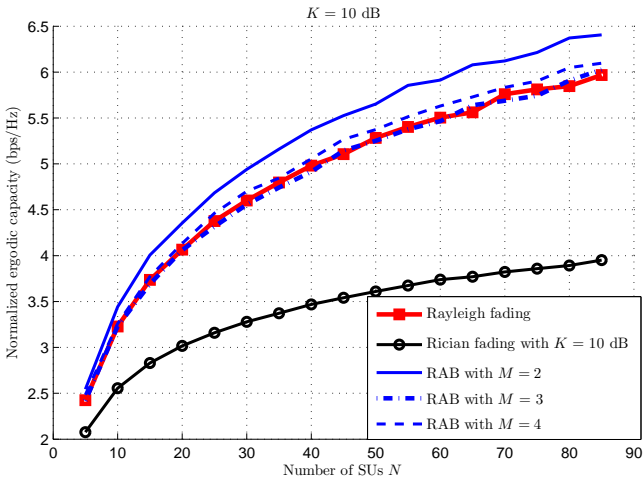}
\caption{Impact of the number of basis patterns on multiuser diversity gain.}
\label{fig_sim}
\end{figure}

\begin{figure}[t]
\centering
\includegraphics[width=3.5in]{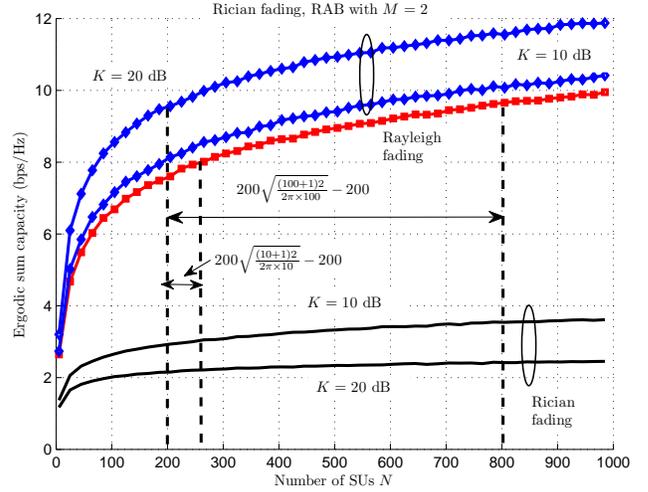}
\caption{Impact of the $K$-factor on the achieved multiuser diversity gain.}
\label{fig_sim}
\end{figure}

\begin{figure}[t]
\centering
\includegraphics[width=3.5in]{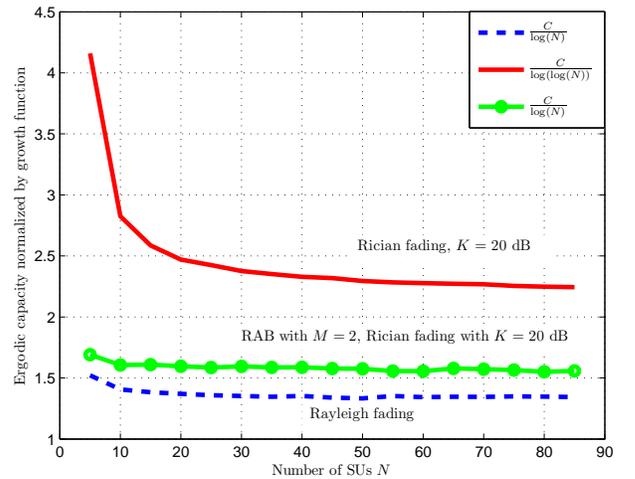}
\caption{Capacity growth rate for various schemes.}
\label{fig_sim}
\end{figure}

Monte-Carlo simulations are carried out and results are averaged over 100,000 runs. For all curves presented herein, the peak interference constraint is set to 1. In Fig. 5, we verify the results of Theorem 1. The ergodic sum capacity for the cognitive MAC network with Rician LoS interference is plotted versus the number of SUs ($N$) for $K$ = 0, 2, 3, and 10. It is clear that when the $K$-factor is non-zero, the capacity is less than the Rayleigh fading scenario, which corresponds to the case when $K$ = 0. This effect is attributed to the fact that MID gain is hindered by LoS interference. For moderate values of $K$, i.e., $K$ = 2 and 3, it is clear that the capacity follows the same logarithmic growth rate experienced by a Rayleigh channel. This is evident from the fact that the capacity curves for $K$ = 0, 2, and 3 are parallel at large values of $N$. MID hindrance is demonstrated by a reduction in the {\it effective number of users}, where the growth rate can be approximated as $\log\left(\frac{N(K+1)}{\exp(K)}\right)$. For $K$ = 2, the number of effective users is less than the actual number of users by a factor of $\frac{2+1}{\exp(2)} \approx 0.4$. Thus, the capacity achieved by 500 SUs when $K$ = 2 is equal to that achieved by 500 $\times$ 0.4 = 200 SUs for $K$ = 0. The difference between the actual and effective number of users for 200 effective SUs is depicted in Fig. 5 at $K$ = 2 and 3. For $K$ = 10, the growth rate is approximated as $\log(\log(N))$. It is clear that the capacity curve for $K$ = 10 is not parallel to the other curves as it grows in a double logarithmic fashion. In Fig. 6, the impact of the number of basis patterns used in RAB is demonstrated by plotting the ergodic capacity for $N$ users normalized to the single user capacity, which represents the multiuser diversity gain, for a cognitive MAC and a reference non-cognitive multiuser network. It is clear that using 2 dumb basis patterns, the capacity is better than that achieved when using 3 and 4 basis patterns. When the number of basis patterns increase, the equivalent interference channel converges to a Rayleigh distribution, and the performance is similar to the $K$ = 0 scenario. Fig. 7 demonstrates the advantages of applying RAB to a cognitive MAC channel with LoS interference. For $K$ = 10 and 20 dB, the ergodic SU capacity severely degrades compared to the Rayleigh fading scenario due to the double logarithmic growth and MID hindrance. The larger the $K$-factor is, the less sum capacity is achieved. Applying RAB with 2 basis patterns, the logarithmic growth rate is restored. In addition to that, it is shown that as the $K$ factor increases, the sum capacity improves, and is actually superior to the Rayleigh fading capacity. This capacity improvement is quantified by Lemma 1, where it is shown that the {\it effective number of users} increases by a factor of $\sqrt{\frac{(K+1)^{2}}{2\pi K}}$. For 200 actual SUs, the effective number of SUs is $200 \sqrt{\frac{(10+1)^{2}}{2\pi \times 10}} \approx 280$ for $K$ = 10, and $200 \sqrt{\frac{(100+1)^{2}}{2\pi \times 100}} \approx 800$ for $K$ = 100 (20 dB). In Fig. 8, the growth rates for various schemes are justified by plotting the ergodic capacity normalized by the growth function. It is obvious that the capacity of the Rayleigh fading interference channel scenario grows according to $\log(N)$, while the capacity for a LoS interference scenario with $K$ = 20 dB grows according to $\log(\log(N))$. After applying RAB, the $\log(N)$ growth rate is restored.

\section{Conclusion}           	
In this paper, it is shown that the capacity of the cognitive MAC channel with LoS interference grows according to $\log\left(\frac{K^{2}+K}{\mathcal{W}\left(\frac{K e^{K}}{N}\right)}\right)$, where $K$ is the Rician $K$-factor. This growth rate can be approximated as $\log\left(\frac{N(K+1)}{\exp(K)}\right)$, which corresponds to the same growth rate of a cognitive MAC channel with non-LoS interference but with an $\frac{N(K+1)}{\exp(K)}$ effective number of SUs. At large values of $K$, the capacity tends to grow double logarithmically with $N$. In order to improve capacity scaling in LoS interference channel, we introduced the {\it Random Aerial Beamforming} (RAB) technique. RAB induces artificial fluctuations in the interference channels by randomizing the weights of the basis patterns of a single radio ESPAR antenna. Using RAB with only 2 basis patterns, it is shown that the capacity grows like $\log \left(\sqrt{\frac{(K+1)^{2}}{2K \pi}} N\right)$. Thus, RAB can be used to exploit LoS interference and improve multiuser diversity by boosting the {\it effective number of SUs}.

\appendices

\section{Proof of Theorem 1}
\renewcommand{\theequation}{\thesection.\arabic{equation}}
We know that when the SUs use single conventional antennas, the SINR is given by $\max_{n}\frac{\gamma_{s,n} \frac{Q_{p}}{\gamma_{sp,n}}}{1+\overline{\gamma}_{p} \gamma_{ps}}$. Thus, the SU ergodic capacity is given by
\begin{equation}
\label{A1}
C = \mathbb{E}\left\{\log\left(1+ \max_{n}\frac{\gamma_{s,n} \frac{Q_{p}}{\gamma_{sp,n}}}{1+\overline{\gamma}_{p} \gamma_{ps}}\right)\right\},
\end{equation}
where $\mathbb{E}\{.\}$ is the expectation operator. Using Jensen's Inequality and the concavity of the logarithmic function, the SU capacity is bounded by
\begin{equation}
\label{A2}
C \leq \log\left(1+ \mathbb{E}\left\{ \frac{Q_{p}}{1+\overline{\gamma}_{p} \gamma_{ps}} \right\} \mathbb{E}\left\{ \max_{n} \frac{\gamma_{s,n}}{\gamma_{sp,n}} \right\}\right),
\end{equation}
which follows from the independence of the SU-to-PU, PU-to-SU, and SU-to-SU channels. The term $\mathbb{E}\left\{ \frac{Q_{p}}{1+\overline{\gamma}_{p} \gamma_{ps}} \right\}$ is a constant and independent of the SU selection, so it does not affect the capacity scaling. We define $z_{n} = \frac{\gamma_{s,n}}{\gamma_{sp,n}}$, thus we are interested in obtaining the probability density function of $z_{n}$. The SU-to-PU channel is Rician, while the secondary channel is assumed to follow a Rayleigh distribution. Thus the pdf of $\gamma_{sp,n}$ is given by \cite{14}
\[f_{\gamma_{sp,n}}(\gamma_{sp,n}) =\]
\begin{equation}
\label{A3}
\frac{(1+K)e^{-K-\frac{(1+K)}{\overline{\gamma}_{sp}} \gamma_{sp,n}}}{\overline{\gamma}_{sp}}  I_{o}\left(2 \sqrt{\frac{K(1+K)}{\overline{\gamma}_{sp}}\gamma_{sp,n}}\right)
\end{equation}
where $I_{o}(.)$ is the modified Bessel function of the first kind and that of $\gamma_{s,n}$ is exponential, i.e., $f_{\gamma_{s,n}}(\gamma_{s,n}) = \frac{1}{\overline{\gamma}_{s}} e^{\frac{-\gamma_{s,n}}{\overline{\gamma}_{s}}}$. Noting that the pdf of $\gamma_{sp,n}$ can be rewritten in terms of the Meijer-$G$ function $\MeijerG{m}{n}{p}{q}{a_1,\ldots,a_p}{b_1,\ldots,b_q}{z}$ [15, Sec. 7.8] as
\[f_{\gamma_{sp,n}}(\gamma_{sp,n}) =\]
\begin{equation}
\label{A5}
 \frac{1+K}{\overline{\gamma}_{sp}} e^{-K-\frac{(1+K)\gamma_{sp,n}}{\overline{\gamma}_{sp}}} \MeijerG{1}{0}{0}{2}{-}{0, 0}{\frac{K (1+K) \gamma_{sp,n}}{\overline{\gamma}_{sp}}}
\end{equation}
and that the pdf of $Z = \frac{X}{Y}$ is given by $f_{Z}(z) = \int_{-\infty}^{+\infty}|y| p_{x,y}(zy, y) dy$ \cite{14}, the pdf of $z_{n}$ can be obtained as shown in (A.11) using the property $z \MeijerG{m}{n}{p}{q}{a_1,\ldots,a_p}{b_1,\ldots,b_q}{z} = \MeijerG{m}{n}{p}{q}{a_1+1,\ldots,a_p+1}{b_1+1,\ldots,b_q+1}{z}$, and then evaluating the resulting integral using the standard laplace transform of a Meijer-$G$ function \cite{15}.

\begin{figure*}[!t]
\normalsize
\setcounter{mytempeqncnt}{\value{equation}}
\setcounter{equation}{12}
\begin{align}
\label{A7}
f_{z_{n}}(z_{n}) &= \frac{1+K}{\overline{\gamma}_{sp} \overline{\gamma}_{s}} \,\,\, e^{-K} \int_{0}^{\infty} \gamma_{sp,n} e^{-\left(\frac{(1+K)}{\overline{\gamma}_{sp}}+\frac{z,n}{ \overline{\gamma}_{s}}\right) \gamma_{sp,n}} \MeijerG{1}{0}{0}{2}{-}{0, 0}{\frac{K (1+K)\gamma_{sp,n}}{\overline{\gamma}_{sp}}} d \gamma_{sp,n} \\
 &= \frac{2 e^{-K}}{K \overline{\gamma}_{s}} \int_{0}^{\infty} e^{-\left(\frac{(1+K)}{\overline{\gamma}_{sp}}+\frac{z_{n}}{ \overline{\gamma}_{s}}\right) \gamma_{sp,n}} \MeijerG{1}{0}{0}{2}{-}{1, 1}{\frac{K (1+K)\gamma_{sp,n}}{\overline{\gamma}_{sp}}} d \gamma_{sp,n} \\
&= \frac{2 e^{-K}}{K \overline{\gamma}_{s}} \times \frac{1}{\frac{1+K}{\overline{\gamma}_{sp}}+\frac{z_{n}}{\overline{\gamma}_{s}}} \times \MeijerG{1}{1}{1}{2}{0}{1, 1}{\frac{K (1+K)}{\overline{\gamma}_{sp}\left(\frac{(1+K)}{\overline{\gamma}_{sp}}+\frac{z_{n}}{\overline{\gamma}_{s}}\right)}} \\
&= \frac{(1+K) \overline{\gamma}_{sp}}{\overline{\gamma}_{s}} \,\,\, e^{-K+\frac{K(1+K)}{(1+K)+\frac{\overline{\gamma}_{sp}}{\overline{\gamma}_{s}}z_{n}}} \left(\frac{(1+K)^{2}+z_{n} \frac{\overline{\gamma}_{sp}}{\overline{\gamma}_{s}}}{\left((1+K)+z_{n} \frac{\overline{\gamma}_{sp}}{\overline{\gamma}_{s}}\right)^{3}}\right).
\end{align}
\setcounter{equation}{\value{mytempeqncnt}+4}
\hrulefill
\vspace*{4pt}
\end{figure*}
The cdf of $z_{n}$ follows directly by integrating (A.16), and is given by
\begin{equation}
\label{A8}
F_{z_{n}}(z_{n}) = 1 - \frac{1+K}{z_{n}\frac{\overline{\gamma}_{sp}}{\overline{\gamma}_{s}}+K+1} \, e^{-K+\frac{K(1+K)}{\frac{\overline{\gamma}_{sp}}{\overline{\gamma}_{s}}z_{n}+K+1}}.
\end{equation}
It can be easily shown that $\lim_{z_{n} \to \infty} \frac{z_{n} f_{z_{n}}(z_{n})}{1-F_{z_{n}}(z_{n})} = \frac{\overline{\gamma}_{sp}}{\overline{\gamma}_{s}}$. Therefore, according to \cite{16}, we can find a sequence of real numbers $\{a_{N}\}_{N=1}^{\infty}$ such that $\frac{z_{N}^{*}}{a_{N}}$ converges in distribution to a {\it Frechet distributed} random variable, where $z_{N}^{*} = \max_{1 \leq n \leq N} z_{n}$, which implies that $\mathbb{E}\{z_{N}^{*}\}$ grows like $a_{N}$. That is, we have $\lim_{N \to \infty} F_{z_{N}^{*}}(z_{N}^{*}) = e^{\frac{-1}{z_{N}^{*}}}$, and the normalizing constant can be obtained as $F_{z_{n}}(a_{N}) = 1-\frac{1}{N}$ \cite{16}. Using the CDF in (\ref{A8}), we can obtain $a_{N}$ in closed-form as
\begin{equation}
\label{A9}
a_{N} = \frac{K^{2}+K}{\frac{\overline{\gamma}_{sp}}{\overline{\gamma}_{s}} \mathcal{W}\left(\frac{Ke^{K}}{N}\right)} - \frac{\overline{\gamma}_{s}(K+1)}{\overline{\gamma_{sp}}},
\end{equation}
where $\mathcal{W}(.)$ is the Lambert W function. Note that the convergence in distribution for the maximum of nonnegative random variables results in moment convergence \cite{18}. Recalling (\ref{A2}), we conclude that the SU capacity grows like $\log\left(\frac{K^{2}+K}{\mathcal{W}\left(\frac{K e^{K}}{N}\right)}\right)$, which concludes the proof.

\section{Proof of Theorem 2}
\renewcommand{\theequation}{\thesection.\arabic{equation}}
The channel between the $i^{th}$ basis pattern of the SU antenna and the PU receiver follows a Rician distribution and can be modeled as
\[h^{i}_{sp,n^{*}}(k) = \sqrt{\frac{K \overline{\gamma}_{sp}}{K+1}} e^{j \phi_{i,n^{*}}} + \sqrt{\frac{\overline{\gamma}_{sp}}{K+1}} b_{i,n^{*}}(k),\]
where $n^{*}$ is the selected SU index, $\phi_{i,n^{*}}$ is the deterministic phase of the LoS component for the $i^{th}$ basis pattern, and $b_{i,n^{*}}(k) \sim \mathcal{CN}(0, \overline{\gamma}_{sp})$ is the Rayeligh distributed scattered component.
The power of the equivalent SU-to-PU channel $\gamma^{eq}_{sp,n^{*}}(k)$ after applying RAB is given by
\begin{equation}
\label{B1}
\left|\frac{1}{\sqrt{M}}\sum_{i=1}^{M} e^{j\theta_{n^{*},i}(k)} \left(\sqrt{\frac{K \overline{\gamma}_{sp}}{K+1}} e^{j \phi_{i,n^{*}}} + \sqrt{\frac{\overline{\gamma}_{sp}}{K+1}} b_{i,n^{*}}(k)\right)\right|^{2},
\end{equation}
which reduces to
\begin{equation}
\label{B2}
\left|\underbrace{\frac{1}{\sqrt{M}}\sum_{i=1}^{M} \left(\sqrt{\frac{K \overline{\gamma}_{sp}}{K+1}} e^{j (\theta_{n^{*},i}(k)+\phi_{i,n^{*}})}\right)}_{\mbox{Artificial Fading}} + \sqrt{\frac{1}{K+1}} c_{n^{*}}(k)\right|^{2},
\end{equation}
where $c_{n^{*}}(k)$ is the equivalent scattering component after applying RAB, which was shown in \cite{6} to preserve its statistics, thus $c_{n^{*}}(k) \sim \mathcal{CN}(0, \overline{\gamma}_{sp})$. We define the following parameters $a = \frac{K}{K+1}$ and $v = \frac{1}{K+1}$, representing the power of the specular and scattered components, respectively. The minimum value of the artificial fading component is 0, when the random phases of the weights of the basis patterns add destructively. For a system with large number of SUs, and for a fixed infinitesimal $\delta > 0$, there exists almost surely a fraction $\epsilon$ of users for which the magnitude of the artificial fading channel satisfies
\[\left|\frac{1}{\sqrt{M}}\sum_{i=1}^{M} \left(\sqrt{a \overline{\gamma}_{sp}} e^{j (\theta_{n^{*},i}(k)+\phi_{i,n^{*}})}\right)\right| < \delta.\]
These $\epsilon N$ users can be thought of as experiencing Rician fading with the norm of the LoS component equal to $\delta$ instead of $a$. Note that the growth rate of the conventional MAC channel with LoS interference given in Theorem 1 can be written in terms of $a$ and $v$ as $\log\left(\frac{a}{v^{2}\mathcal{W}\left(\frac{a e^{\frac{a}{v}}}{v N}\right)}\right)$. It is highly likely that the SU with the maximum SINR would belong to the set of $\epsilon N$ having an infinitesimally small LoS component. Therefore, we can approximate the D-TDMA scheme as picking the SU with the maximum SINR among the $\epsilon N$ SUs. The SU capacity of such scheme grows at least as fast as $\log\left(\delta \left(v^{2}\mathcal{W}\left(\frac{\delta e^{\frac{\delta}{v}}}{v \epsilon N}\right)\right)^{-1}\right)$. For $\delta \to 0$, $e^{\frac{\delta}{v}} \to 1$, and $\mathcal{W}\left(\frac{\delta e^{\frac{\delta}{v}}}{v \epsilon N}\right) \to \frac{\delta e^{\frac{\delta}{v}}}{v \epsilon N}$, the growth rate tends to
\begin{equation}
\label{B5}
\log\left(\frac{\epsilon N}{v}\right).
\end{equation}
Thus, by applying RAB, the SU capacity grows logarithmically with $N$.

\section{Proof of Lemma 1}
\renewcommand{\theequation}{\thesection.\arabic{equation}}
Combining (\ref{12}) and (\ref{B2}), the equivalent SU-to-PU can be thought of as experiencing Rician fading with the power of the LoS component having a random power that changes over time. A conventional Rician channel is defined via the $a$ and $v$ parameters as explained in Appendix B, where $a$ represents the power of the LoS component and $v$ is the power of the scattered component. Therefore, the equivalent channel is Rician for a given value of $a$. Let $\tilde{a}$ be the random power of the LoS component, where $\tilde{a} = \left(1+\cos(\theta_{n,1}(k)+\phi_{1,n} +\theta_{n,2}(k)+\phi_{2,n}) \right)$ as explained in Section II. We know that $\cos(\theta_{n,1}(k)+\phi_{1,n} +\theta_{n,2}(k)+\phi_{2,n})$ follows the pdf in (\ref{10}). Using random variable transformation, it can be shown that
\begin{equation}
\label{C1}
f_{\tilde{a}}(\tilde{a}) = \frac{K+1}{\pi K \sqrt{1-\left(1-\frac{\tilde{a}(K+1)}{K}\right)^{2}}}, 0 \leq \tilde{a} \leq 2\frac{K}{K+1}.
\end{equation}
Let $\gamma^{eq}_{sp,n}$ be the equivalent channel power after applying RAB for the $n^{th}$ SU. Given that the equivalent SU-to-SU channel is Rayleigh distributed after applying, and defining $z^{eq}_{n} = \frac{\gamma^{eq}_{s,n}}{\gamma^{eq}_{sp,n}}$, it follows from (\ref{A8}) that the cdf of $z^{eq}_{n}$ for a given $\tilde{a}$
\begin{equation}
\label{C2}
F_{z^{eq}_{n}}(z^{eq}_{n}|\tilde{a}) = 1 - \frac{1/v}{z^{eq}_{n}\frac{\overline{\gamma}_{sp}}{\overline{\gamma}_{s}}+\frac{1}{v}} \, \exp\left(-\frac{\tilde{a}}{v}+\frac{\tilde{a}}{\frac{\overline{\gamma}_{sp}}{\overline{\gamma}_{s}}z^{eq}_{n}v^{2}+v}\right).
\end{equation}
The conditional cdf in (\ref{C2}) is a reformulation of (\ref{A8}) in terms of $a$ and $v$ instead of $K$. Let $p = \frac{1}{v}-\frac{1}{\frac{\overline{\gamma}_{sp}}{\overline{\gamma}_{s}}z^{eq}_{n}v^{2}+v}$, the cdf of $z^{eq}_{n}$ is obtained by averaging $F_{z^{eq}_{n}}(z^{eq}_{n}|\tilde{a})$ over the pdf of $\tilde{a}$ as follows
\[F_{z^{eq}_{n}}(z^{eq}_{n}) =\]
\[1 - \frac{1/v}{z^{eq}_{n}\frac{\overline{\gamma}_{sp}}{\overline{\gamma}_{s}}+\frac{1}{v}} \int_{0}^{2\frac{K}{K+1}} \frac{(K+1) \exp\left(-p \tilde{a}\right)}{\pi K \sqrt{1-\left(1-\frac{\tilde{a}(K+1)}{K}\right)^{2}}} d \tilde{a}, \]
applying the substitution $y = 1-\frac{\tilde{a}(K+1)}{K}$, the integral reduces to
\[1 - \frac{1/v}{z^{eq}_{n}\frac{\overline{\gamma}_{sp}}{\overline{\gamma}_{s}}+\frac{1}{v}} \exp\left(-p \frac{K}{K+1}\right) \int_{-1}^{1} \frac{ \exp\left(p \frac{K}{K+1}y\right)}{\pi \sqrt{1-y^{2}}} dy, \]
by applying another substitution as $y = \cos(\theta)$, the integral simplifies as
\[1 - \frac{1/v}{z^{eq}_{n}\frac{\overline{\gamma}_{sp}}{\overline{\gamma}_{s}}+\frac{1}{v}} \times \frac{\exp\left(-p \frac{K}{K+1}\right)}{\pi} \int_{0}^{\pi} \exp\left(p \frac{K}{K+1}\cos(\theta)\right) d \theta, \]
which is derived in closed-form in [15, eq. (3.339)] as
\[F_{z^{eq}_{n}}(z^{eq}_{n}) =\]
\begin{equation}
\label{C3}
1 - \frac{1/v}{z^{eq}_{n}\frac{\overline{\gamma}_{sp}}{\overline{\gamma}_{s}}+\frac{1}{v}} \exp\left(-p \frac{K}{K+1}\right) I_{o}\left(p \frac{K}{K+1}\right).
\end{equation}
The secondary base station selects the SU with the maximum $z^{eq}_{n}$. Let $z^{eq}_{n^{*}} = \max_{1 \leq n \leq N} z^{eq}_{n}$, the statistics of $z^{eq}_{n^{*}}$ depend on the tail pdf and cdf of $z^{eq}_{n}$ \cite{6}. It can be shown that for $z^{eq}_{n} \to \infty$, $p \to \frac{1}{v}$, and $\exp\left(-p \frac{K}{K+1}\right) I_{o}\left(p \frac{K}{K+1}\right) \approx \frac{1}{\sqrt{2 \pi K}}$. Thus, the tails of the cdf and pdf are given by
\begin{equation}
\label{C5}
F_{z^{eq}_{n}}(z^{eq}_{n}) \sim 1 - \frac{1/v}{z^{eq}_{n}\frac{\overline{\gamma}_{sp}}{\overline{\gamma}_{s}}+\frac{1}{v}} \times \frac{1}{\sqrt{\pi K}},
\end{equation}
and
\begin{equation}
\label{C5}
f_{z^{eq}_{n}}(z^{eq}_{n}) \sim \frac{\frac{\overline{\gamma}_{sp}}{v\overline{\gamma}_{s}}}{\left(z^{eq}_{n}\frac{\overline{\gamma}_{sp}}{\overline{\gamma}_{s}}+\frac{1}{v}\right)^{2}} \times \frac{1}{\sqrt{\pi K}}.
\end{equation}
The asymptotic expression in (\ref{C5}) is verified by Fig. 9. It can be easily shown that $\lim_{z^{eq}_{n} \to \infty} \frac{z^{eq}_{n} f_{z^{eq}_{n}}(z^{eq}_{n})}{1-F_{z^{eq}_{n}}(z^{eq}_{n})} = \frac{\overline{\gamma}_{sp}}{\overline{\gamma}_{s}}$. Therefore, we can find a sequence of real numbers $\{a_{N}\}_{N=1}^{\infty}$ such that $\frac{z^{eq}_{n^{*}}}{a_{N}}$ converges in distribution to a {\it Frechet distributed} random variable. Following the analysis in Appendix A, and using the cdf in (\ref{C5}), we can obtain $a_{N}$ in closed-form as
\[a_{N} \approx \sqrt{\frac{(K+1)^{2}}{2\pi K}}\frac{N\overline{\gamma}_{s}}{\overline{\gamma}_{sp}}.\]

\begin{figure}[t]
\centering
\includegraphics[width=3in]{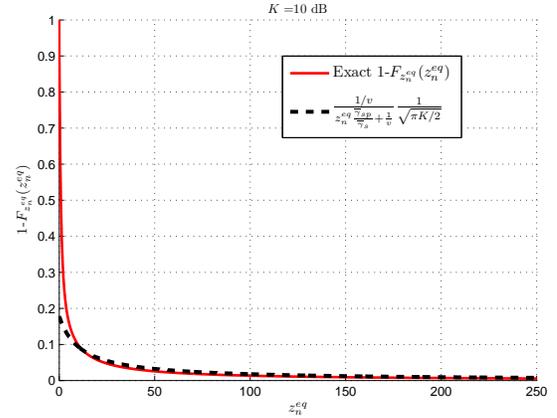}
\caption{Approximation for the cdf of the equivalent SU-to-PU channel power.}
\label{fig_sim}
\end{figure}

% that's all folks
\end{document}